 \newtheorem{prop}{Proposition}
\newtheorem{definition}{Definition}
 \newtheorem{remark}{Remark}
 \newtheorem{case}{Case}
\title{Continuous-time Signal Temporal Logic Planning with Control Barrier Function}
\author{Guang Yang, Roberto Tron and Calin Belta}
\begin{document}
\maketitle

\begin{abstract}
    Temporal Logic (TL) guided control problems have gained interests in recent years. By using the TL, one can specify a wide range of temporal constraints on the system and is widely used in cyber-physical systems. On the other hand, Control Barrier Functions have also gained interests in the context of safety critical applications. However, most of the existing approaches only focus on discrete-time dynamical systems.
    In this paper, we propose an offline trajectory planner for linear systems subject to safety and temporal specifications. Such specifications can be expressed as logical junctions or disjunctions of linear CBFs, or as STL specifications with linear predicates. Our planner produces trajectories that are valid in continuous time,  while assuming only discrete-time control updates and arbitrary time interval in the STL formula.  
    
    Our planner is based on a Mixed Integer Quadratic Programming (MIQP) formulation, where the linear STL predicates are encoded as set of linear constraints to guarantee satisfaction at on a finite discrete set of time instants, while we use CBFs to derive constraints that guarantee continuous satisfaction between time instants. Moreover, we have shown the predicates can be encoded as time-based CBF constraints for system with any relative degrees. We validate our theoretical results and formulation through numerical simulations.
\end{abstract}

\section{Introduction}
Temporal Logic Based control has been widely used in the context of persistent surveillance \cite{leahy2016persistent}, traffic control \cite{sadraddini2016model} and distributed sensing \cite{serlin2018distributed}. Originated from the context of formal methods in model checking\cite{baier2008principles}, there are a wide range of Temporal Logics (TL), such as Linear Temporal Logic (LTL) and Computation Tree Logic (CTL) are used to describe a rich set of specifications of system behaviors. For applications that require to define real values with bounded time constraints, the Signal Temporal Logic (STL) \cite{maler2004monitoring} was introduced as a means of achieving such desired system behavior. 

The notion of space robustness of STL for real-valued signal was first introduced in \cite{donze2010robust}. The proposed method of evaluating the space robustness on a continuous signal occurs in discrete time. Later, the authors in \cite{raman2014model}, \cite{raman2015reactive} and \cite{sadraddini2015robust} use mixed-integer encoding method to encode STL robustness in discrete time steps. This type of approach has a major drawback that there is no guarantee of this satisfaction of the formula in between two sampled time steps. In \ref{sec:example2}, we compare the discrete-time robustness based STL planning with our proposed method for safety. In Fig \ref{fig:f2}, it clearly shows the continuous-time trajectory violates safety constraints due to the sampling limitation. In real systems, there are usually limitation on the controller update rate. Therefore, the time constraint for each STL predicate is limited to the controller update instants. In \ref{sec:example3}, we illustrate an example where the time constraint from a STL predicate causes asynchronous controller updates between the desired update instants and actual update instants. More details will be explained in later section.

The Control Barrier Functions (CBF) was introduced in \cite{wieland2007constructive}. The main goal is to ensure the system trajectory stays \textit{forward invariant} under some pre-defined safety sets. It is later extended to the applications of Adaptive Cruise Control \cite{ames2014control}, swarm manipulation \cite{borrmann2015control}, heterogeneous multi-Robot manipulation\cite{wang2016safety} and bipedal robotic walking\cite{hsu2015control}, where the control problem is formulated as a Quadratic Program (QP) with the CBF constraints. The formulated optimization problem is solved point-wise in time. This type of approach also suffer from the drawback that is mentioned earlier due to the controller update rate, i.e., there is no guarantee the CBF constraints will hold true in between the two controller updates. There are some works, such as \cite{ghaffari2018safety}, address the issue, but the approach limits to controller design rather than the full trajectory planning. The CBF type of approach has also extended to machine learning based control \cite{chen2017enhancing},\cite{wang2018safe} with partially known system dynamics and unknown disturbances. 

In this paper, we are interested in linear systems with continuous dynamics and linear STL predicates. We would like to synthesize a sequence of control inputs subject to the STL specification by formulating an optimization based planning problem. The control inputs are applied discretely in a zeroing order hold (ZOH) manner, i.e., the control holds at as constant value in-between two updates. There are three major contributions from this paper. First, we propose a novel integer encoding method for obtaining the lower bound of a linear CBF constraint for a fixed time interval. Second, we overcome the drawbacks from discrete-time STL robustness based control by directly encoding the STL robustness as time-based CBF constraints to achieve continuous-time satisfactory. Third, we illustrate a way to resolve the issue of asynchronous controller update time due to the limitation of sampling rate on the actual system. 

\section{Preliminaries}
\subsection{Notation}
We use $\mathbb{Z}$
and $\mathbb{R}^n$ to denote the set of integers and the set of real numbers in $n$ dimensions, respectively. We define $x[t]$ as the real value of $x$ at time instance $t$. The Lie derivative of a smooth function $h(x(t))$ along dynamics $\dot{x}(t)=f(x(t))$ is denoted as $\pounds_{f} h(x) := \frac{\partial h(x(t))}{\partial x(t)} f(x(t))$. We use $\pounds_{f}^{r_b} h(x)$ to define a Lie derivative of higher order $r_b$, where $r_b \geq 0$. A function $f: \mathbb{R}^n \mapsto \mathbb{R}^m$ is called \textit{Lipschitz continuous} on $\mathbb{R}^n$ if there exists a positive real constant $L \in \mathbb{R}^+$, such that $\|f(y)-f(x)\| \leq L \|y-x\|, \forall x,y \in \mathbb{R}^n$. Given a continuously differentiable function $h:\mathbb{R}^n \mapsto \mathbb{R}$, we denote $h^{r_b}$ as its $r_b$-th derivative with respect to time $t$. A continuous function $\alpha:[-b,a) \mapsto [-\infty,\infty)$, for some $a>0, b>0$, belong to extended class $K$ if $\alpha$ is strictly increasing on $\mathbb{R}^+$ and $\alpha(0)=0$. 

\subsection{Dynamical System and Safety Set}
Consider an affine control system:
\begin{equation}
\begin{aligned}
    \dot{x} &= Ax+Bu, \\ \label{continousSystem}
    y &= Cx,
\end{aligned}
\end{equation}
where $x\in \mathbb{R}^n$, $A \in \mathbb{R}^{n\times n}$, $B\in \mathbb{R}^{n\times m}$, $C\in \mathbb{R}^{1\times n}$ and $u\in \mathbb{R}^{m}$.
 
Given a continuously differentiable function $h:\mathbb{R}^n \mapsto \mathbb{R}$ and dynamics \eqref{continousSystem}, with $f(x)=Ax, g(x)=B$, the relative degree $r_b \geq 1$ is defined as the smallest natural number such that $\pounds_{g} \pounds_{f}^{r_b-1} h(x) u \neq 0$. The time derivative of $h$ are related to the Lie derivatives by:

\begin{equation} \label{eq:rbh}
h^{r_b}(x) = \pounds_{f}^{r_b}h(x) + \pounds_{g}\pounds_{f}^{r_b-1}h(x)u.
\end{equation}
Next, we define a closed safety set $C$:
\begin{equation}\label{eq:safetySet}
\begin{aligned}
C &= \{x \in{\mathbb{R}^n}|h^{r_b}(x)  \geq 0 \}. \\
\partial{C} &= \{x \in \mathbb{R}^n|h^{r_b}(x)  = 0 \}, \\
Int(C) &= \{x \in \mathbb{R}^n|h^{r_b}(x)  > 0 \},
\end{aligned}
\end{equation}
where $\partial{C}$ is the boundary of the set and $Int(C)$ is the interior of the set. 

\subsection{Exponential Control Barrier Function}
To ensure forward invariance for systems with higher relative degrees, the author in \cite{nguyen2016exponential} introduces the notion of Exponential Control Barrier Function (ECBF). Before formally reviewing its definition, a transverse variable is defined as
\begin{align}
\xi_b(x) = \left[ \begin{matrix} h(x), \dot{h}(x), ...,h^{r_b}(x) \end{matrix} \right]^T.
\end{align}
with virtual control input
\begin{equation}
\mu = (\pounds_{g}\pounds_{f}^{r_b-1}h(x))^{-1}(\mu-\pounds_{f}^{r_b}h(x)).
\end{equation}
The input-output linearized system is defined as
\begin{align*}
&\dot{\xi}_b(x) = A_b \xi_b(x) + B_b \mu, \\
&y = C_b \xi_b(x)= h(x),
\end{align*}

\begin{definition}
(\textit{Exponential Control Barrier Function}) 
Consider the dynamical system \eqref{continousSystem}, the safety set $C$ defined in \eqref{eq:safetySet} and $h(x)$ with relative degree $r_b \geq 1$. Then $h(x)$ is an exponential control barrier function (ECBF) if there exists $K_b \in \mathbb{R}^{1\times r_b}$, such that
\begin{equation} \label{eq:ecbfConstraint}
\inf \limits_{u\in U} [\pounds_{f}^{r_b} h(x)+\pounds_{g}\pounds_{f}^{r_b-1} h(x)u+ K_b \xi_b(x) ] \geq 0, \forall x \in Int(C).
\end{equation}
The row vector of coefficients $K_b$ is selected such that the closed-loop matrix $A_b -B_b K_b$ has all negative real eigenvalues.
\end{definition}

\begin{remark}
For a dynamical system with relative degree $r_b = 1$, it becomes a Zeoring Control Barrier Function (ZCBF), with the following constraint 
\begin{align}\label{eq:zcbfConstraint}
\inf \limits_{u\in U} [\pounds_{f} h(x)+&\pounds_{g}h(x)u+\alpha(h(x))] \geq 0, \quad \forall x \in Int(C),\nonumber
\end{align}
where $\alpha$ is a class-K function.
\end{remark}

\subsection{Signal Temporal Logic}
Given a STL formula $\varphi$ with horizon $N$, the syntax of STL is defined as:
\begin{equation}\label{eq:STLsyntax}
    \varphi := \top | \mu | \neg \varphi | \varphi_1 \land \varphi_2 | \varphi_1 \mathcal{U}_{[a,b]} \varphi_2,
\end{equation}
where $\top$ is \textbf{True} symbol in Boolean logic and $\mu$ is the predicate. For each predicate $\mu_i$, we define
\begin{equation} \label{eq:yOut}
    \mu_i := y_i[t] \geq 0,
\end{equation}
with $i=1,...,N_p$, where $N_p$ is the total number of predicates for $\varphi$. The STL semantics is defined as the following:
\begin{equation}
    \begin{aligned}
    &(y,t) \models \mu &\Leftrightarrow& y[t] \geq 0 \\
    &(y,t) \models \mu_1 \land \mu_2 &\Leftrightarrow &(y,t) \models \mu_1 \land (y,t) \models \mu_2 \\
    &(y,t) \models \neg \mu &\Leftrightarrow& \neg ((y,t) \models \mu)\\
    &(y,t) \models \varphi_1 \mathcal{U}_{[a,b]} \varphi_2 &\Leftrightarrow& \exists t' \in [t+a, t+b] \\&s.t. (y,t') \models \varphi_2 \\
    &\land \forall t'' \in [t,t'], (y,t'')\models \varphi_1\\
    \end{aligned}
\end{equation}

We denote the robustness of $\varphi$ as $\rho_{y}^{\varphi}[t]$. The state trajectory satisfies the spec $\varphi$ if and only if $\rho_{y}^{\varphi}[t] \geq 0, \forall t\in[0,t_f]$, where $t_f$ is the end time of the STL horizon. The robustness for each predicate is defined as the following
\begin{equation}
    \begin{aligned}
    &\rho^\mu[y,t] &=& y[t]\\
    &\rho^{\neg \varphi}[y,t] &=& -\rho^{\varphi}[y,t]\\
    &\rho^{\varphi_1 \land \varphi_2} &=& \min(\rho^{\varphi_1} \land \rho^{\varphi_2})\\
    &\rho^{\varphi_1 \mathcal{U}_{[a,b]} \varphi_2}[y,t] &=& \sup_{\tau \in t+[a,b]} (\min(\rho^{\phi_2}[y,\tau]),\inf_{s\in[t,\tau]}\rho^{\varphi_1}[y,s])
    \end{aligned}
\end{equation}

\subsection{Mixed Integer Formulation for STL}
The binary encoding of STL robustness using mixed integer was proposed in \cite{sadraddini2015robust}. The STL formula can be encoded as a set of linear constraints using big-M method, where a sufficiently large number $M$ is introduced to enforce logical constraints. For a STL formula $\varphi$ with horizon $N$, we define a binary variable $z_{\varphi}^t \in \{0,1\}$ with $\varphi \models \top \iff z_{\varphi}^t = 1$ at time step $t$ and $\varphi \models \neg \top$ otherwise. 

For the $i$-th predicate $\mu_i := y_i[t] \geq 0$, we introduce another binary variable $z_{\mu_i}^t \in \{0,1\}$ with $y_i[t] > 0 \iff z_{\mu}(t) = 1$. The big-M constraint is
\begin{align*}
    y[t] \leq M z_{\mu}^t\\
    y[t] \leq M (1-z_{\mu}^t)
\end{align*}

Given a STL formula $\varphi$, we can recursively encode the rest of the logical operators as the following:

\begin{itemize}\label{STL_formulation}
  \item[] \underline{Conjunction}: $z_{\varphi}^t = \wedge_{i=1}^{p} z_{\psi_i}^{t_i}$:
  \begin{align*}
      &z_{\varphi}^t \leq z_{\psi_i}^{t_i},\\
      &z_{\varphi}^t \geq 1-p+\sum \limits_{i=1}^{p} z_{\psi_i}^{t_i},
  \end{align*}
  \item[] \underline{Disjunction}: $z_{\varphi}^t = \vee_{i=1}^{p} z_{\psi_i}^{t_i}$:
  \begin{align*}
      &z_{\varphi}^t \geq z_{\psi_i}^{t_i},\\
      &z_{\varphi}^t \leq \sum \limits_{i=1}^{p} z_{\psi_i}^{t_i},
  \end{align*}
  \item[] \underline{Negation}:$z_{\varphi}^t =  \neg z_{\psi}^{t}$: $z_{\varphi}^t = 1- z_{\psi}^{t}$
  \item[] \underline{Eventually}: $\varphi = \textbf{F}_{[a,b]} \psi$:
  \begin{align*}
    z_{\varphi}^t = \bigvee \limits_{\tau=t+a}^{t+b} z_{\psi_i}^{\tau},
  \end{align*}
  
   \item[] \underline{Always}:$\varphi = \textbf{G}_{[a,b]} \psi$:
   \begin{align*}
    z_{\varphi}^t = \bigwedge \limits_{\tau=t+a}^{t+b} z_{\psi_i}^{\tau},
  \end{align*}
  
   \item[] \underline{Until}:$\varphi = \psi_1 \mathcal{U}_{[a,b]}\psi_2 = \mathbf{G}_{[0,a]} \psi_1 \wedge \mathbf{F}_{[a,b]}\psi_2 \wedge \mathbf{F}_{[a,a]}\psi_1 \mathcal{U}\psi_2$:
   
\end{itemize}
\begin{remark}
   The Big-M encoding for Always ($\textbf{G}$) operator only ensures the signal satisfies a given formula at the sampling time step $[t]$.
   \end{remark}

\section{Problem Statement}
Given the linear system in \eqref{continousSystem}, with initial state $x_0 \in X \subset \mathrm{R^n}$, the goal is to formulate a MIQP problem with the constraints of STL formula and cost function $J(u(t))$. The control sequence is synthesized by solving the corresponding optimization problem offline and the resulting state trajectory $x(t)$ must satisfies a continuous-time STL formula $\varphi$ between $t \in [0,t_f]$ with minimum control effort. 

\section{STL based control with Control Barrier Function}
In section \ref{sec:zoh}, we formally define the ZOH mechanism for implementing the control discretely. In \ref{sec:timeDiscretization}, we show the discretization method and closed-form solution for general linear systems under the ZOH control. Next, given a linear system and linear constraints on system states, we present the CBFs formulation in \ref{sec:CBFconstraintForlinearSys}. And in \ref{sec:CBFlowerbound}, we demonstrate how to obtain the lower bound of a given linear CBF constraint using mixed-integer encoding. In \ref{sec:CBFSTLencoding}, we illustrate certain STL predicates can be encoded as CBF constraints and achieve continuous-time satisfaction. Finally, the MIQP based planner is formally defined in \ref{sec:miqp}.

\subsection{Zeroth-Order Hold Control}
\label{sec:zoh}
The Zeroth-Order Hold (ZOH) control is used in this paper that a generated control signal is held at $t_k$ over a period of time, i.e. $u(s) = u(t_k), \forall s \in [t_k,t_{k+1})$. The sequence of control update time instants $\{t_k\}_{k \in \mathbb{N}}$ is strictly increasing. 

\subsection{Time Discretization}
\label{sec:timeDiscretization}
To discretize the system $\eqref{continousSystem}$, we define a control holding period $\tau$ and $x(t_k)$ as the state at the $k$-th update step, where $k=0,...,N-1$. The closed-form solution for next state $x(t_{k+1})$ is
\begin{equation}\label{eq:discretizeEquation}
\begin{aligned} 
    x(t) =  e^{At}x(t_k) + \int_{t_k}^{t} e^{A(t-s)} ds Bu,& \\
   t_k \leq t \leq t_k+\tau&,
\end{aligned}
\end{equation}
where $u$ is a constant between $[t_0, t_0+\tau]$. 

Under the assumption that $A$ is non-singular, we can utilize the property of matrix exponential and rewrite \eqref{eq:discretizeEquation} to :
\begin{equation}\label{alternativeClosedForm}
x(t) =  e^{At}x(t_0) + A^{-1}(e^{At}-I) Bu.
\end{equation}

Then, we can perform eigen-decomposition on $e^{At}$:
\begin{equation}\label{eq:eigenDecomposition}
    e^{At} = Pe^{Dt}P^{-1},
\end{equation}
where $P \in \mathbb{R}^{n \times n}$ contains the eigenvectors $v_i$ of matrix $A$, and $D\in  \mathbb{R}^{n \times n}$ contains the eigenvalues $\lambda_i$ of $A$ on its diagonal. By combining \eqref{alternativeClosedForm} and \eqref{eq:eigenDecomposition}, we get:
\begin{equation} \label{eq:diagonal}
x(t) = Pe^{Dt}P^{-1}x(0) + A^{-1}(Pe^{Dt}P^{-1}-I)Bu
\end{equation}

If matrix $A$ is singular, we can rewrite \eqref{eq:discretizeEquation} to the following form:
\begin{align} \label{eq:generalForm}
    x(t) = e^{At}x(t_0) + \int_{t_k}^{t} e^{At}V e^{-Js} V^{-1} ds Bu, \\
    t_0 \leq t \leq t_0+\tau, \nonumber
\end{align}
where $e^{-As} = V e^{-Js} V^{-1}$.

\subsection{CBF for Linear Constraint}
\label{sec:CBFconstraintForlinearSys}
Given an initial state $x(t_0) \in \mathbb{R}^n$ and linear system \eqref{continousSystem}, let us consider a linear safety constraint on the $i$-th state variable
\begin{equation} \label{eq:linSafetyConstr}
    h(x_i(t)) = x_i(t) + x_{i,const},
\end{equation}
Based on the closed-form solution of the linear system \eqref{eq:generalForm} and linear safety constraint \eqref{eq:linSafetyConstr}, we can write CBF constraint as
\begin{align} 
\zeta(t) =& \sigma + \sum \limits_{i=1}^n \sum \limits_{j=0}^{s(i)-1} (c_{ij}^{(x)T} x(t_0) e^{\lambda_i t} t^j+{c_{ij}^{(u)T}}u_0 e^{\lambda_i t}t^j) \geq 0,\label{eq:zeta_func}
\end{align}
where $c_{i,j}^{(x)} \in \mathbb{R}^{n}$ is the constant coefficients for $x(t_0)$ and $c_{i,j}^{(u)}$ is the constant coefficients for $u$. The $n$ is the number of Jordan blocks and $s(i)$ is the dimension of the corresponding Jordan block. We define a set of inequalities as the following:
\begin{align} \label{eq:zeta_component}
\zeta^{(x)}(t) &= \beta_{x} + \sigma + \sum \limits_{i=1}^n \sum \limits_{j=0}^{s(i)-1} c_{ij}^{(x)T} x(t_0) e^{\lambda_i t} t^j \geq 0, \\ 
\zeta^{(u)}(t) &= \beta_{u} + \sum \limits_{i=1}^n \sum \limits_{j=0}^{s(i)-1} {c_{ij}^{(u)T}}u_0 e^{\lambda_i t} t^j \geq 0, \nonumber
\end{align}
with $\beta_{x} + \beta_{u} = 0$.

\begin{prop} \label{prop1}
Given the linear system \eqref{continousSystem} and safety constraint \eqref{eq:linSafetyConstr}, if the inequality of \eqref{eq:zeta_component} holds, then the inequality of \eqref{eq:linSafetyConstr} also holds. 
\end{prop}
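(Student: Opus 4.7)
The plan is to exploit the fact that the two inequalities in \eqref{eq:zeta_component} are a decomposition of the single inequality \eqref{eq:zeta_func}, with $\beta_x$ and $\beta_u$ acting as complementary slack constants. Concretely, I first intend to add the two inequalities $\zeta^{(x)}(t)\geq 0$ and $\zeta^{(u)}(t)\geq 0$ term by term, then use the constraint $\beta_x+\beta_u=0$ to cancel the slack constants, obtaining
\begin{equation*}
\zeta^{(x)}(t)+\zeta^{(u)}(t) \;=\; \sigma + \sum_{i=1}^{n}\sum_{j=0}^{s(i)-1}\bigl(c_{ij}^{(x)T}x(t_0)\,e^{\lambda_i t}t^j + c_{ij}^{(u)T}u_0\,e^{\lambda_i t}t^j\bigr) \;=\; \zeta(t).
\end{equation*}
Since the sum of two nonnegative quantities is nonnegative, this immediately yields $\zeta(t)\geq 0$.

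The second step is to identify $\zeta(t)$ with $h(x_i(t))$. Here I would invoke the closed-form expression for the state trajectory from \eqref{eq:generalForm} (or \eqref{eq:diagonal} in the non-singular case), where $x(t)$ decomposes into a sum of terms of the form $e^{\lambda_i t}t^j$ multiplied by linear functionals of $x(t_0)$ and $u_0$; this is the Jordan-form decomposition for arbitrary linear systems underlying the coefficients $c_{ij}^{(x)}$ and $c_{ij}^{(u)}$. Substituting this expansion into the linear safety function \eqref{eq:linSafetyConstr} and identifying $\sigma$ with $x_{i,\mathrm{const}}$ gives exactly $h(x_i(t))=\zeta(t)$, so $\zeta(t)\geq 0$ is equivalent to $h(x_i(t))\geq 0$.

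The only subtle point is the bookkeeping of the Jordan decomposition that produces the coefficients $c_{ij}^{(x)}, c_{ij}^{(u)}$ and the scalar $\sigma$; once that identification is in place the argument is a one-line sum of inequalities. I do not expect any real obstacle, since the proposition is essentially a statement that a particular convex splitting $\zeta = \zeta^{(x)} + \zeta^{(u)}$ with $\beta_x+\beta_u=0$ provides a sufficient (though not necessary) condition for the original CBF inequality, and the hard work is already contained in the derivation of \eqref{eq:zeta_func} from the closed-form solution rather than in the proof of the proposition itself.
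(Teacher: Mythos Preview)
Your proposal is correct and follows essentially the same approach as the paper: add $\zeta^{(x)}(t)\geq 0$ and $\zeta^{(u)}(t)\geq 0$, use $\beta_x+\beta_u=0$ to cancel the slack constants, and recover $\zeta(t)\geq 0$. The paper's proof stops there (it only argues \eqref{eq:zeta_component} $\Rightarrow$ \eqref{eq:zeta_func}), whereas your second step makes explicit the identification of $\zeta(t)$ with the safety constraint via the Jordan-form expansion of the closed-form solution---this is a welcome clarification that the paper treats as implicit in the derivation preceding the proposition.
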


\begin{proof}
To prove \eqref{eq:zeta_component} $\Rightarrow$ \eqref{eq:zeta_func}, given \eqref{eq:zeta_component} is satisfied, by adding $\zeta^{(x)}(t)$ and  $\zeta^{(u)}(t)$, it is trivial to show that \eqref{eq:zeta_func} is satisfied given the constraint $\beta_{x} + \beta_{u} = 0$.
\end{proof}

\begin{remark}
  If the inequality of \eqref{eq:zeta_func} holds, there must exists a pair of $\beta_x$ and $\beta_u$, such taht $\beta_x+\beta_u = 0$.
\end{remark}
We can write the CBF constraint as the following
\begin{align}
  \zeta(t) =& \sum \limits_{i=1}^n \sum \limits_{j=0}^{s(i)-1} (\zeta_{i,j}^{(x)}(t) + \zeta_{i,j}^{(u)}(t)),
\end{align}
with 
\begin{align*}
\zeta_{i,j}^{(x)}(t) &= \beta_{x} + \sigma + c_{ij}^{(x)T} x(t_0) e^{\lambda_i t} t^j \\ 
\zeta_{i,j}^{(u)}(t) &= \beta_{u} + {c_{ij}^{(u)T}}u_0 e^{\lambda_i t} t^j
\end{align*}

\subsection{CBF Lower Bound through mixed-integer encoding}
\label{sec:CBFlowerbound}
Given a time interval $[t_k,t_k+\tau]$, we have \begin{align}
    \zeta_k (t) &=  \sum \limits_{i=1}^n \sum \limits_{j=0}^{s(i)-1} (\zeta_{k,i,j}^{(x)}(t) + \zeta_{k,i,j}^{(u)}(t)) \\
    &\geq  \min_{\substack{i \in [1,..,n] ;\\ j \in [1,...,s(i)]}} (\zeta_{k,i,j}^{(x)}(t) + \zeta_{k,i,j}^{(u)}(t)).\\ \nonumber
\end{align}

If $\min_{\substack{i \in [1,..,n] ;\\ j \in [1,...,s(i)]}} (\zeta_{k,i,j}^{(x)}(t) + \zeta_{k,i,j}^{(u)}(t)) \geq 0$, then the system trajectory $x(t)$ is safe for $t_k \leq t \leq t_k+\tau$.

Given a time window $[t_k,t_k+\tau]$ and state $x(t_k)$, each term within $\zeta(t)$ can either increase or decrease its overall value for $t>t_k$. We denote the lower bound for the $k$-th time step CBF constraint to be:
\begin{align} \label{eq:zetaLowerBound}
    \zeta_{k,\min}(x(t_k),\tau) = \min_{\substack{i \in [1,..,n] ;\\ j \in [1,...,s(i)]}} \zeta_{k,i,j}^{(x)}(t) +\zeta_{k,i,j}^{(u)}(t),
\end{align}
where $\zeta_{k,\min}(x(t_k),\tau)$ is a lower bound of $\zeta_k (t)$. To find the lower bound, we propose to use Big-M encoding method to remove all the terms that are either monotonically increasing or converging to a positive value. The only terms that left in $\zeta_k(t)$ are strictly decreasing and its minimum value $\zeta_{k,\min}(x(t_k),\tau)$ can be determined given the time bound $[t_k,t_k+\tau]$.

First, We define a set of integer variables $z_{k,i,j}^{(x)} \in \{0,1\}$ and $z_{k,i,j}^{(u)} \in \{0,1\}$ for the following equations at the $k$-th time step
\begin{align}
\zeta_{k,i,j}^{(x)}(t) &= \beta_k^{(x)}+\sigma + \sum \limits_{i=1}^n \sum \limits_{j=0}^{s(i)-1} c_{k,i,j}^{(x)T} x(t_k) e^{\lambda_i t} t^j\\ 
\zeta_{k,i,j}^{(u)}(t) &= \beta_k^{(u)} + \sum \limits_{i=1}^n  \sum \limits_{j=0}^{s(i)-1} c_{k,i,j}^{(u)T}u_k e^{\lambda_i t} t^j
\end{align}

There are a few situations we need to consider. Here are the rules for finding the lower bound of $\zeta_{k,i,j}^{(x,u)}$:
\begin{equation}
    z_{k,i,j}^{(x,u)} =
    \begin{cases}
      0, & c_{k,i,j}^{(x,u)}x(t_0) \geq 0 \wedge \lambda_i \geq 0\\
      0, & c_{k,i,j}^{(x,u)}x(t_0) \geq 0 \wedge \lambda_i \leq 0 \wedge \sigma \geq 0\\
      1, & \text{otherwise}
    \end{cases}
 \end{equation}
For example, if we want to enforce $c_{k,i,j}^{(x,u)}x(t_0) \geq 0 \iff z_{k,i,j}^{(x,u)} = 0$, the following mixed integer encoding is used:
 \begin{align*}
     c_{k,i,j}^{(x,u)}x(0) \leq M (1-z_{k,i,j}^{(x,u)}), \\ 
     -c_{k,i,j}^{(x,u)}x(0) \leq Mz_{k,i,j}^{(x,u)},
 \end{align*}
 For $\lambda_i \geq 0 \iff z_{k,i,j}^{(x,u)} = 0$, we have the following
 \begin{align*}
   \lambda_i \leq M (1-z_{k,i,j}^{(x,u)}), \\ 
   -\lambda_i \leq M z_{k,i,j}^{(x,u)},
 \end{align*}
where $M$ is a sufficiently large number. 

For all terms such that $z_{k,i,j}^{(x,u)} = 1$, we need to ensure $\zeta_{k,\min}(x(t_k),\tau)$ is positive given some time window.  Consider the CBF lower bound \eqref{eq:zetaLowerBound}, for time $\forall t \in [t_k, t_k+\tau]$. We have the following cases: 
\begin{case}
For $c_{k,i,j}^{(x,u)T} x(t_k) < 0, \lambda_i \geq 0, j > 0$, $\min \zeta_{k,i,j}^{(x,u)}(t) = \zeta_{k,i,j}^{(x,u)}(\tau)$. 
\end{case}

\begin{case}
For $c_{k,i,j}^{(x,u)T} x(t_k) < 0, \lambda_i > 0, j = 0$, $\min \zeta_{k,i,j}^{(x)}(t) = c_{k,i,j}^{(x)T} x(t_k) + \sigma + \beta_{k}^{(x)}$ and $\min \zeta_{k,i,j}^{(u)}(t) = c_{k,i,j}^{(u)T} x(t_k) + \beta_{k}^{(u)}$
\end{case}

\begin{case}
For $c_{k,i,j}^{(x,u)T} x(t_k) < 0, \lambda_i < 0, j > 0$, $\min \zeta_{k,i,j}^{(x,u)}(t) = \zeta_{k,i,j}^{(x,u)}(-\frac{j}{\lambda_i})$. 
\end{case}

By following the rules above, we can add the lower bound of the CBF constraint $\min \zeta_{k,i,j}^{(x,u)}(t)$ as linear constraints to our optimization problem. 

\subsection{CBF constraints with STL predicates}
\label{sec:CBFSTLencoding}
The idea of using CBF for safety can be carried over to ensure STL satisfactory in continuous time. For example, we want to ensure $\phi = \textbf{G}_[t_3,t_4] \mu_1$ where $\mu_1 := x_2[t] - 3$. Based on the integer encoding method above, assuming the system starts at $t=0$ and $z_{\phi}^t = 1$, we have
\begin{align*}
    &z_{\phi}^t \leq z_{\mu_1}^{t_3},\\
    &z_{\phi}^t \leq z_{\mu_1}^{t_4},\\
    &z_{\phi}^t \geq -1+z_{\mu_1}^{t_3}+z_{\mu_1}^{t_4}
\end{align*}

The formulation above ensures $z_{\mu_1}^{t_3} = z_{\mu_1}^{t_4} =1$, which implies $x_2[t_3] \geq 3 $ and $x_2[t_4] \geq 3 $. But we cannot draw a conclusion in between $[t_3,t_4]$. To ensure $x_2(t) \models \mu_1, \forall t \in [t_3,t_4]$, we need to define the following proposition. 

\begin{prop}
Given a linear system in \eqref{continousSystem} and always operator $\textbf{G}$.  For some linear STL predicate $\mu$, the trajectory $x(t)$ satisfies formula $\varphi = G_{[a,b]} (\mu:=h_{\mu}(x))$ if and only if

\begin{align} 
h_{\mu}(x(a)) \geq 0, \label{eq:h_mu} \\
\zeta_{k,\min}(x(t_k),\tau) \geq 0, \label{eq:zeta_min}
\end{align}
where $\tau := b - a$ and $t_k = a$.
\end{prop}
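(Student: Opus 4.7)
The plan is to prove the biconditional by handling the two directions separately, with most of the work being in the sufficiency direction. I will treat $\zeta$ as the CBF constraint function associated with the predicate $h_\mu$ over the interval $[t_k, t_k+\tau] = [a,b]$, as developed in Section \ref{sec:CBFconstraintForlinearSys}.

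For sufficiency, assume both \eqref{eq:h_mu} and \eqref{eq:zeta_min} hold. First, I would use the definition of the lower bound in \eqref{eq:zetaLowerBound} to conclude that $\zeta_k(t) \geq \zeta_{k,\min}(x(t_k),\tau) \geq 0$ for all $t \in [a,b]$, since $\zeta_{k,\min}$ was constructed precisely as the minimum of the component terms over the interval. Next, invoking Proposition \ref{prop1}, $\zeta(t) \geq 0$ for all $t \in [a,b]$ implies the CBF constraint derived from $h_\mu$ is satisfied on this interval. Combined with the initial-condition guarantee $h_\mu(x(a)) \geq 0$ from \eqref{eq:h_mu}, forward invariance of the safe set $\{x : h_\mu(x) \geq 0\}$ yields $h_\mu(x(t)) \geq 0$ for all $t \in [a,b]$. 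By the STL semantics of the always operator, $x(t) \models \mu$ on $[a,b]$, so $x \models \varphi$.

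For necessity, assume $x \models \varphi$, i.e., $h_\mu(x(t)) \geq 0$ for all $t \in [a,b]$. Condition \eqref{eq:h_mu} follows immediately by evaluating at $t = a$. For \eqref{eq:zeta_min}, I would argue that along the trajectory the CBF function $\zeta(t)$ is nonnegative on $[a,b]$ (this is its defining role: if $h_\mu \geq 0$ is to be certified as forward-invariant through the construction in Section \ref{sec:CBFconstraintForlinearSys}, we must have $\zeta(t) \geq 0$). Then, because the Big-M encoding in Section \ref{sec:CBFlowerbound} eliminates exactly those component terms that are guaranteed nonnegative over $[t_k, t_k + \tau]$ and retains only the monotonically decreasing contributions, $\zeta_{k,\min}$ equals a sum of terms each of which is a valid per-component minimum. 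Thus $\zeta_{k,\min}(x(t_k),\tau) \geq 0$ whenever the full CBF constraint is satisfied along the trajectory.

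The main obstacle is the necessity direction, because $\zeta_{k,\min}$ is a conservative lower bound rather than the exact infimum of $\zeta_k(t)$ over $[t_k, t_k+\tau]$. I would handle this by appealing to the case analysis in Cases 1--3, which shows that the minimizing time instant of each retained term is captured exactly (either at $t = \tau$, at $t = t_k$, or at $t = -j/\lambda_i$). This makes the lower bound tight on the relevant component and ensures that the converse implication is not weakened by the conservatism of the encoding. If strict tightness cannot be established, the statement should be read as a sufficient condition in the reverse direction, and I would flag this as a subtlety worth clarifying in the final write-up.
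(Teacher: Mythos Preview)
Your sufficiency argument is essentially identical to the paper's: assume \eqref{eq:h_mu} and \eqref{eq:zeta_min}, use the definition of $\zeta_{k,\min}$ to get the componentwise minimum nonnegative, invoke Proposition~\ref{prop1} to conclude $\zeta(t)\geq 0$ on $[a,b]$, and then appeal to forward invariance from the ECBF/ZCBF definition together with $h_\mu(x(a))\geq 0$ to obtain $h_\mu(x(t))\geq 0$ on $[a,b]$.

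Where you differ is that you attempt the necessity direction, and here your instincts are correct while the paper is silent. The paper's proof only establishes sufficiency; it never argues the converse, despite the ``if and only if'' in the statement. Your concern is well-founded: $\zeta_{k,\min}$ is built as a conservative lower bound of $\zeta_k(t)$, not its exact infimum, and even if it were the exact infimum of $\zeta_k$, the implication ``$h_\mu(x(t))\geq 0$ on $[a,b]$ $\Rightarrow$ $\zeta_k(t)\geq 0$ on $[a,b]$'' is not generally valid---forward invariance of the safe set does not force the CBF inequality to hold (a trajectory can remain safe while the CBF condition is violated). Your attempt to rescue necessity via tightness of Cases~1--3 does not close this gap, because tightness of the lower bound on $\zeta_k$ says nothing about whether $\zeta_k$ itself must be nonnegative when $h_\mu$ is. The honest reading is that the proposition is a sufficient condition, and your flag at the end is the right call.
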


\begin{proof}
Assume \eqref{eq:h_mu} is satisfied, that implies the system trajectory $x(t)$ is within the safety set defined by $h_{\mu}(x(t))$ at $t=a$. Based on the definition of \eqref{eq:zetaLowerBound}, the inequality \eqref{eq:zeta_min} implies
\begin{align*}
    \min_{\substack{i \in [1,..,n] ;\\ j \in [1,...,s(i)]}} \zeta_{k,i,j}^{(x)}(t) +\zeta_{k,i,j}^{(u)}(t) \geq 0,  a \leq t \leq b.
\end{align*}
With Proposition \ref{prop1}, the inequality above further implies $\zeta(t) \geq 0, \forall t \in [a,b]$.

Based on the definition of the ECBF (ZCBF) \eqref{eq:ecbfConstraint}, the system is \textit{forward invariant}, i.e., $x(t)$ stays within the safety set defined by $h_{\mu}$ for $\tau$ period and $x(t) \models \mu, a \leq t \leq b$.
\end{proof}

\subsection{Optimization Problem}
\label{sec:miqp}
The trajectory planning problem is formulated as the following MIQP: 
\begin{equation} 
\begin{aligned}
& \underset{\textbf{u},\textbf{x}}{\text{min}}
& & \int_{0}^{t_f} u^Tu \\
& \text{s.t.}
& & x[k+1] = A_k x[k] + B_k u_k\\
& & & x(t) 	\models \varphi \\
& & & u_{l} \leq u_k \leq u_{u}.\\
& & & z_{\varphi}, z_{k,i,j}^{(x)}, z_{k,i,j}^{(u)}\in \{0,1\} \\
& & & k = 0,...,N-1. \\
& & & t \in [0,t_f], i=[1,...,n],j=[1,...,s(i)]
\end{aligned}
\label{eq:optimizationProblem}
\end{equation}

Note the dynamics constraints are encoded using direct multiple-shooting method based on the closed-form solution of linear dynamics. The goal is to solve the offline optimal control problem \eqref{eq:optimizationProblem} above and obtain an optimal trajectory ($\textbf{x}^*$,$\textbf{u}^*$), with $\textbf{x}^*=x_1^*,...,x_N^*$ and $\textbf{u}^*=u_0^*,...,u_{N-1}^*$, under the mixed-integer constraints from the STL specifications and the CBF constraints. We assume a total number of update steps $N$ and individual update period $\tau_k := \frac{t_f}{N}$. For each time step, given a fixed time interval $[t_k,t_k+\tau_k]$, the solver will attempt to find a feasible pair of $(u_k, x(t)), t_k \leq t \leq t_k + \tau_k$ that satisfies all constraints for $k=1,...,N$. 

\section{Numerical Examples}
Let us consider a double integrator system that has the following form:
\begin{align} \label{eq:doubleIntSys}
\left[\begin{matrix}\dot{x_1}(t) \\ \dot{x_2}(t) \end{matrix} \right]= \left[ \begin{matrix} 0&1\\ 0&0 \end{matrix} \right] \left[\begin{matrix} x_1(t) \\ x_2(t) \end{matrix} \right]+ \left[ \begin{matrix} 0\\ 1 \end{matrix} \right] u.
\end{align}
Based on \eqref{eq:generalForm}, the closed-form solution for this system is
\begin{equation}
    \left[\begin{matrix}x_1(t) \\ x_2(t) \end{matrix} \right]= \left[ \begin{matrix} 1&t\\ 0&1 \end{matrix} \right] \left[\begin{matrix} x_1(t_k) \\ x_2(t_k) \end{matrix} \right]+ \left[ \begin{matrix} t& \frac{t^2}{2}\\ 0&t \end{matrix} \right] \left[ \begin{matrix} 0\\ 1 \end{matrix} \right] u.
\end{equation}

Given horizon $N$ and final time $t_f$ in seconds. We define a sequence of discrete time steps $\{t_k|k = 0,..., N, t_0 = 0, t_N = t_f\}$. Starting at initial state $x(t_0) = [1,-1]^T$, the goal is to generate a sequence of control $\textbf{u} = u_0,...,u_{9}$ that ensure the system satisfy a continuous-time STL specification $\varphi$.

In example 1, we enforce a velocity constraint using CBF on the system while its position is required to oscillate between a specific interval. In example 2, we combine safety requirements and logic operator to achieve disjunctive ($\vee$) safety specification. A comparison is made between our method and the method used in \cite{sadraddini2015robust}. In example 3, we further demonstrate how our method can be used for asynchronous update given the time constraints from STL specification and sampling rate from the system are not the same. All examples were formulated as MIQPs using Gurobi \cite{gurobi} and solved on a i5-8259U CPU. 

\subsection{Example 1: STL based planning with Safety Constraints}
Consider the following STL formula with $t_f = 2s$:
\begin{align}
\varphi_1 = &F_{[0.2s,0.8s]} (x_1(t) <= -2) \wedge\\ 
&F_{[1s,1.4s]} (x_1(t) >= 2) \wedge \nonumber\\ 
&F_{[1.6s,2s]} (x_1(t) <= -2), \nonumber
\end{align}
with the safety requirement of $ -10 < x_2(t) < 10, \forall t$. The goal is to steer the state $x_1(t)$ to oscillate between $-2$ and $2$ within some real time interval. 
\begin{remark}
  The safety requirement can be viewed as STL predicates in continuous time with the following form: $\textbf{G}_{[0,t_f]} (h_{\varphi_1,1}(x))\wedge(h_{\varphi_1,2}(x))$, with $h_{\varphi_1,1}(x) = x_2 -10$ and $h_{\varphi_1,2}(x) = -x_2 + 10$.
\end{remark}
The elapsed time for solving the MIQP is 0.02 seconds.

\begin{figure}[H]
 \includegraphics[width=0.52\textwidth]{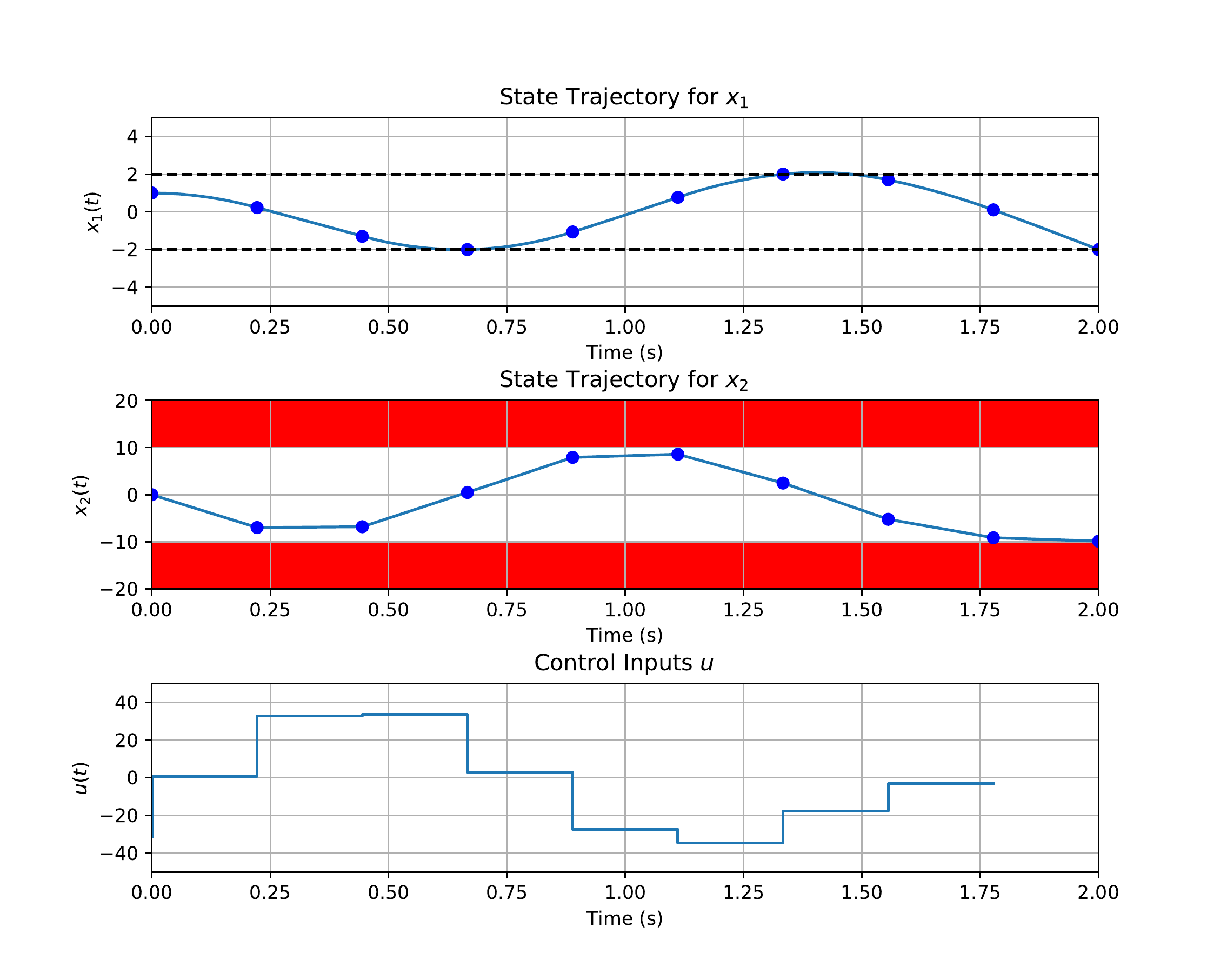}
  \caption{Velocity constraints using CBFs}
 \end{figure}

\subsection{Example 2: Safety Sets as the STL Predicates with Disjunction}
\label{sec:example2}
Consider the following two-dimensional double-integrator system with $t_f = 1s$:
\begin{align} \label{eq:2ddoubleIntSys}
\left[\begin{matrix}\dot{x_1}(t) \\ \dot{x_2}(t) \\ \dot{x_3}(t) \\ \dot{x_4}(t)  \end{matrix} \right]= \left[ \begin{matrix} 0&1&0&0\\ 0&0&0&0\\0&0&0&1\\0&0&0&0 \end{matrix} \right] \left[\begin{matrix} x_1(t) \\ x_2(t)\\ x_3(t)\\ x_4(t) \end{matrix} \right]+ \left[ \begin{matrix} 0&0\\ 1&0 \\0&0 \\0&1 \end{matrix} \right] \left[ \begin{matrix} u_1 \\ u_2\end{matrix} \right],
\end{align}
with corresponding closed-form solution:
\begin{align} \label{eq:2ddoubleIntSys}
\left[\begin{matrix}x_1(t) \\ x_2(t) \\ x_3(t) \\ x_4(t)  \end{matrix} \right]= \left[ \begin{matrix} 1&t&0&0\\ 0&1&0&0\\0&0&1&t\\0&0&0&1 \end{matrix} \right] \left[\begin{matrix} x_1(t) \\ x_2(t)\\ x_3(t)\\ x_4(t) \end{matrix} \right]+ \left[ \begin{matrix} \frac{t^2}{2}&0\\ t&0 \\0&\frac{t^2}{2} \\0&t \end{matrix} \right] \left[ \begin{matrix} u_1 \\ u_2\end{matrix} \right],
\end{align}

The extension of our proposed approach is to directly encode safety sets as part of the STL specification. In other words, the safety sets follow the STL syntax \eqref{eq:STLsyntax}. In this example, we demonstrate the disjunctive logic operator on safety sets (e.g. $h_1(x) \geq 0 \vee h_2(x) \geq 0$). 

We would like the system trajectory to satisfy the following STL formula: 
\begin{align} \label{eq:STL_example_spec}
\varphi_2 := &\textbf{F}_{[0.1s,0.6s]} (x_1(t) \leq -0.5 \wedge x_3(t) \geq 0.5) \wedge\nonumber\\ &\textbf{F}_{[0.7s,1.0s]} (x_1(t) \geq 1 \wedge x_3(t) \geq 1) \wedge\\ \nonumber &\textbf{G}_{[0s,1.0s]} (x_1(t)\geq 0 \vee x_3(t) \geq 0),\\ \nonumber &t\in {[0,t_f]}. \end{align}

The predicate with always operator can be viewed as a disjuction of two safety sets by defining $h_{\varphi_2,1}(x) = x_1(t)$ and $h_{\varphi_2,2}(x) = x_3(t)$ with the following equivalence: 
\begin{align*}
\textbf{G}_{[0,t_f]} (x_1(t)\geq 0 \vee x_3(t) \geq 0) \models \top \iff \\ (h_{\varphi_2,1}(x(t)) \geq 0) \vee (h_{\varphi_2,2}(x(t)) \geq 0), t\in[0,t_f]
\end{align*}

Note that $h_{\varphi_2,1}$ and $h_{\varphi_2,2}$ have relative degrees of $r_b = 2$, we need to use ECBF \eqref{eq:ecbfConstraint} constraints:
\begin{align}\label{eq:example2_cbf_constraint}
    &\zeta_{\varphi_2,1} = \pounds_{f}^{2} h_{\varphi_2,1}(x)+\pounds_{g}\pounds_{f} h_{\varphi_2,1}(x)u + k_1 h_{\varphi_2,1} + k_2 \dot{h}_{\varphi_2,1}(x)\\ \nonumber
    &\zeta_{\varphi_2,2} =  \pounds_{f}^{2} h_{\varphi_2,2}(x)+\pounds_{g}\pounds_{f} h_{\varphi_2,1}(x)u + k_1 h_{\varphi_2,2} + k_2 \dot{h}_{\varphi_2,1}(x), \nonumber
\end{align}
where the system is within safety set $h_{\varphi_2,1}$ or $h_{\varphi_2,2}$, if $\zeta_{\varphi_2,1} \geq 0$ or $\zeta_{\varphi_2,1} \geq 0$ respectively. We can obtain the lower bounds $\min \zeta_{\varphi_2,1}(x,t)$,$\min \zeta_{\varphi_2,2}(x,t)$ using our proposed mixed-integer encoding method in \ref{sec:CBFlowerbound} and apply the following constraints:
\begin{align*}
    &\min \zeta_{k,\varphi_2,1}(x,t) \geq 0\\
    &\min \zeta_{k,\varphi_2,2}(x,t) \geq 0\\
    &t_k \leq t \leq t_k+\tau, k=1,...,N
\end{align*}

\begin{figure}[htbp]
\begin{minipage}[t]{0.50\linewidth}
    \includegraphics[width=\linewidth]{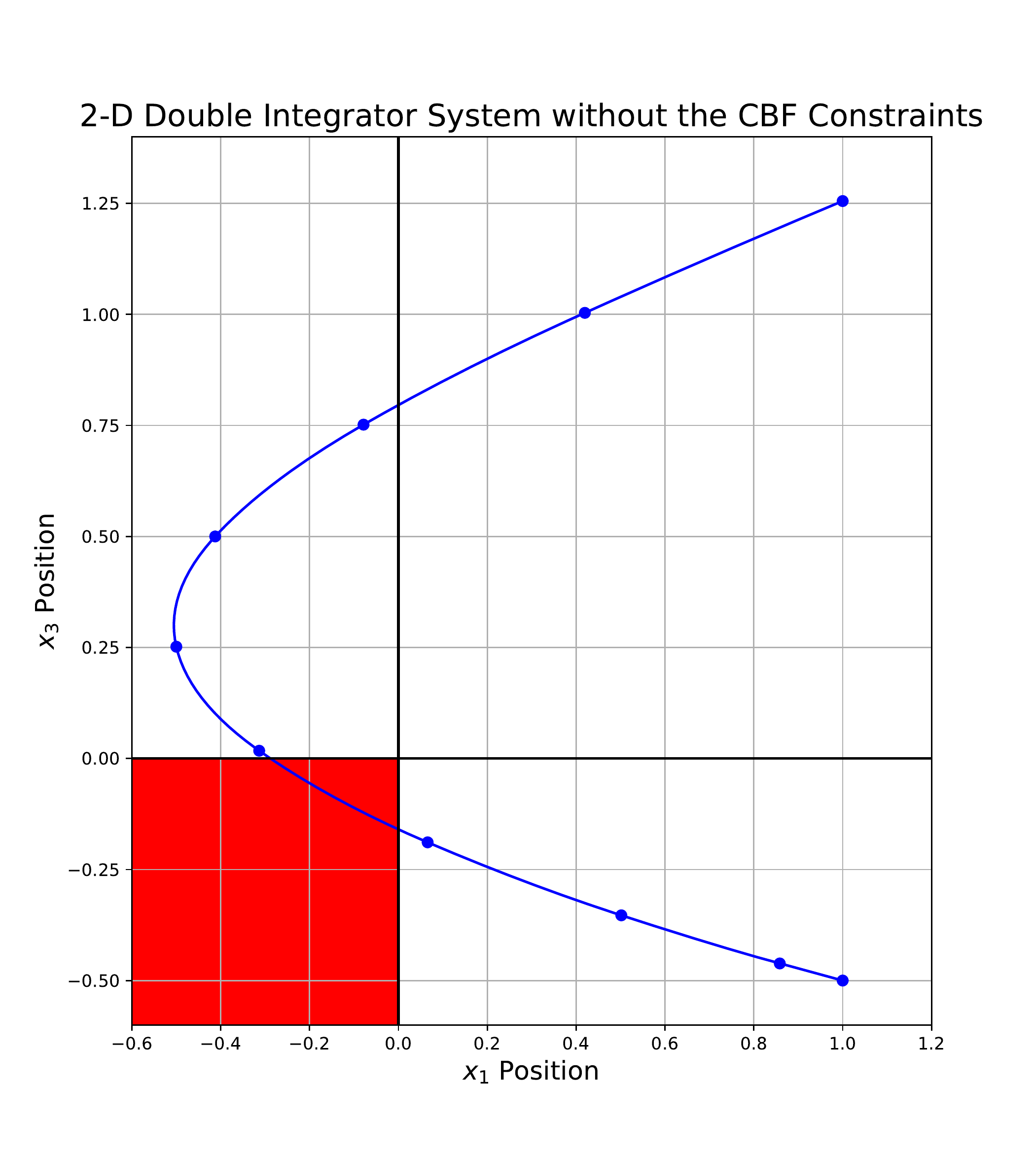}
    \caption{Case 1}
    \label{fig:f2}
\end{minipage}%
    \hfill%
\begin{minipage}[t]{0.50\linewidth}
    \includegraphics[width=\linewidth]{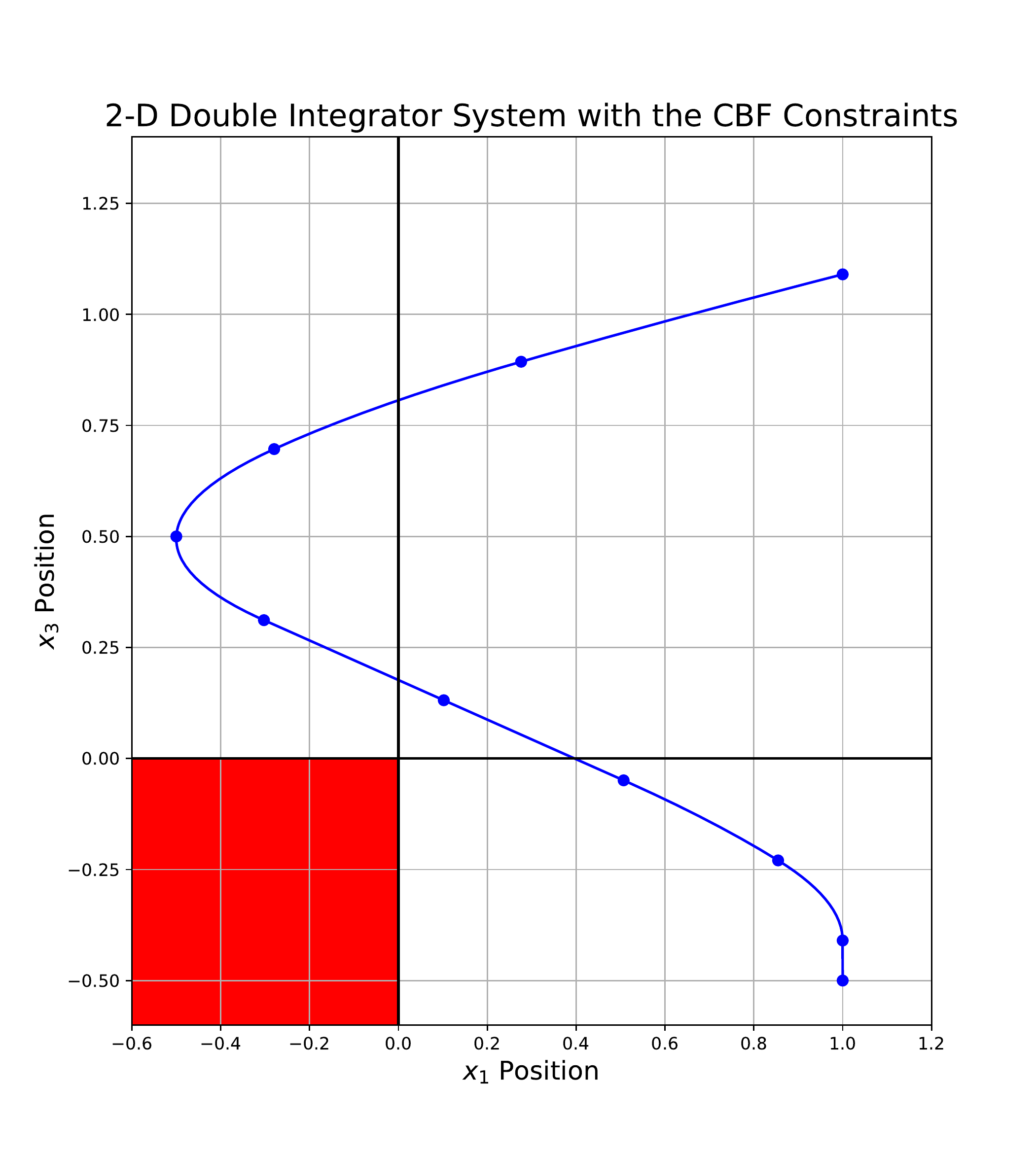}
    \caption{Case 2}
    \label{fig:f3}
\end{minipage} 
\end{figure}

In Case 1, the STL specification $\varphi_2$ is encoded with integer encoding method. Notice how the discrete states trajectory satisfies $\varphi_2$ in Figure $\ref{fig:f2}$, but the continuous-time state trajectory clearly violates the specification. In Case 2, the CBFs constraints \eqref{eq:example2_cbf_constraint} are used in the MIQP and resulting trajectory (Shown in Figure \ref{fig:f3}) satisfies $\varphi_2$ in continuous time. The optimization problems in Case 1 and Case 2 are solved in 0.063s and 0.12s respectively.
\vspace{-0.5pt}
\subsection{Example 3: continuous-time STL under asynchronous time lines}
\label{sec:example3}
Finally, we illustrate the continuous-time STL formula can be satisfied under the limitation of fixed system sampling rate using our proposed method. Let us consider the case where the system has a fixed sampling rate but some STL predicates defined an time interval in-between two sampling instances. Such formulation might leads to the issue of violation on the specification by using the integer encoding method without CBF constraints. 

Consider the following example with the same double integrator system \eqref{eq:doubleIntSys}, with STL formula $\varphi_3 = \textbf{G}_{[0.63s, 0.80s]}  (x_2(t) >= 3) \wedge \textbf{F}_{[1.4s, 2.0s]} (x_2(t) <= -4)$.

We approach the problem by considering two systems under different time sequences, namely \textit{Simulated System} and \textit{Real System}, with notation $sim$ and $real$ respectively. Next, we define time sequences $\{t_{sim}^{k_{sim}}\}$ and $\{t_{real}^{k_{real}}\}$, with $k_{sim}=1,...,N_{sim}$ and $k_{real}=1,...,N_{real}$ (Fig \ref{fig:example3timeline}). In addition, we define the control sequences as $u_{sim}$ and $u_{real}$ accordingly.

\begin{figure}[H]
\centering
\includegraphics[width=0.9\linewidth]{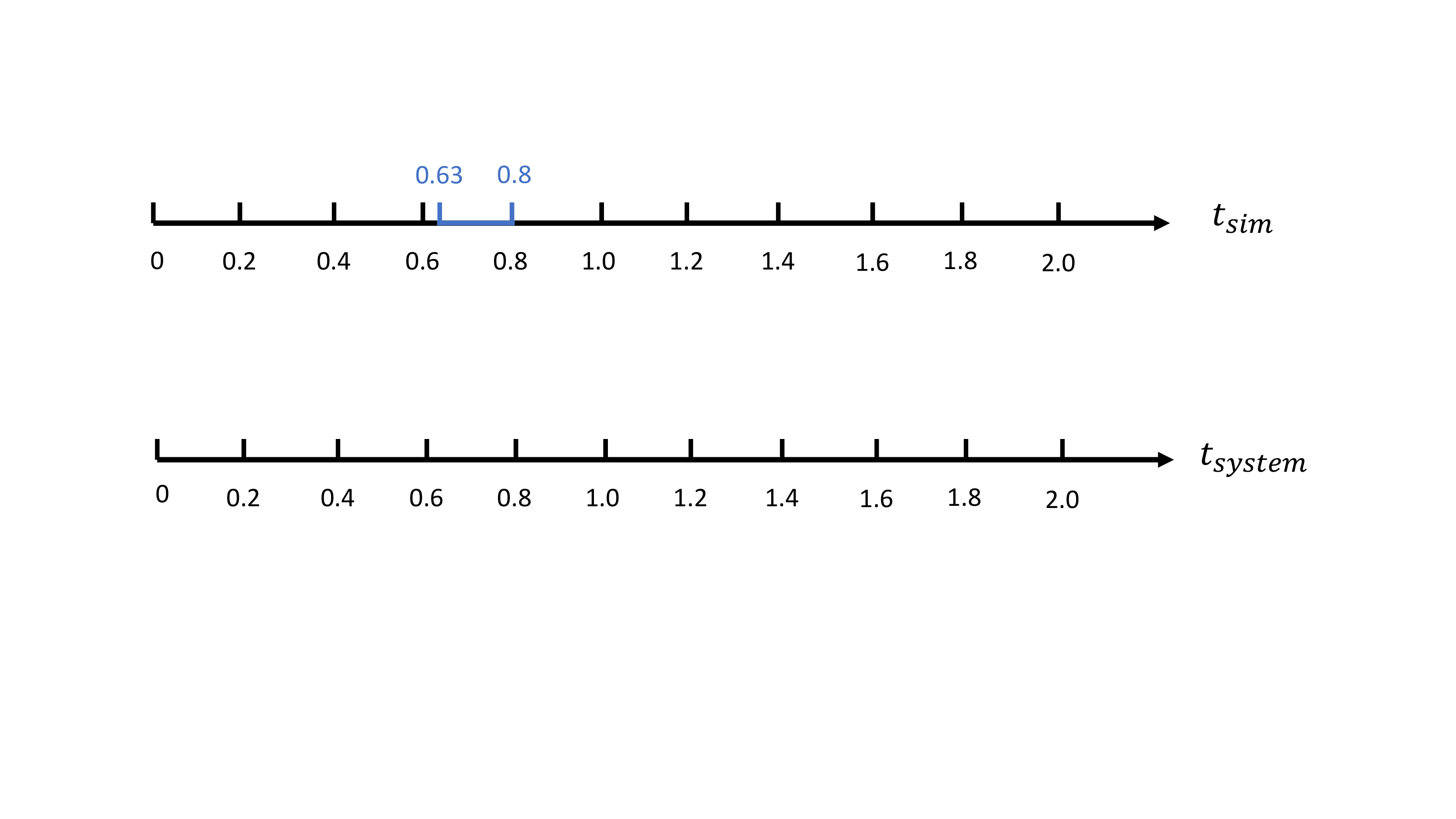}
\caption{Time lines for $t_{sim}$ and $t_{real}$}
\label{fig:example3timeline}
\end{figure}

For this example, we have $N_{sim} = 11$ and $N_{real} = 12$ because the Always $\textbf{G}$ predicate is defined in-between two sampling instances, i.e., $[0.63s,0.8s]$. Notice under the system update constraint, it only allows to apply control on the sampling time instants (i.e., $t=0.2s,0.4s,...,2.0s$). To resolve this asynchronous issue, we solve the corresponding MIQP under $t_{sim}$ with additional constraint $u_{sim}[t=0.6] = u_{sim}[t=0.63]$.

For $\textbf{G}_{[0.63s, 0.80s]}  (x_2(t) >= 3)$, the following CBF constraint is added:
\begin{equation}\label{eq:example3CBF}
 \zeta_{\min,\varphi_3}(x_2(t=0.63),\tau) \geq 0,
\end{equation}
where $\tau = 0.8s-0.63s = 0.17s$.

Finally, we can apply the synthesized control $u_{sim}$ under $t_{real}$ and still satisfies $\varphi_3$ in continuous time. 

\begin{figure}[H]
\centering
\includegraphics[width=1.0\linewidth]{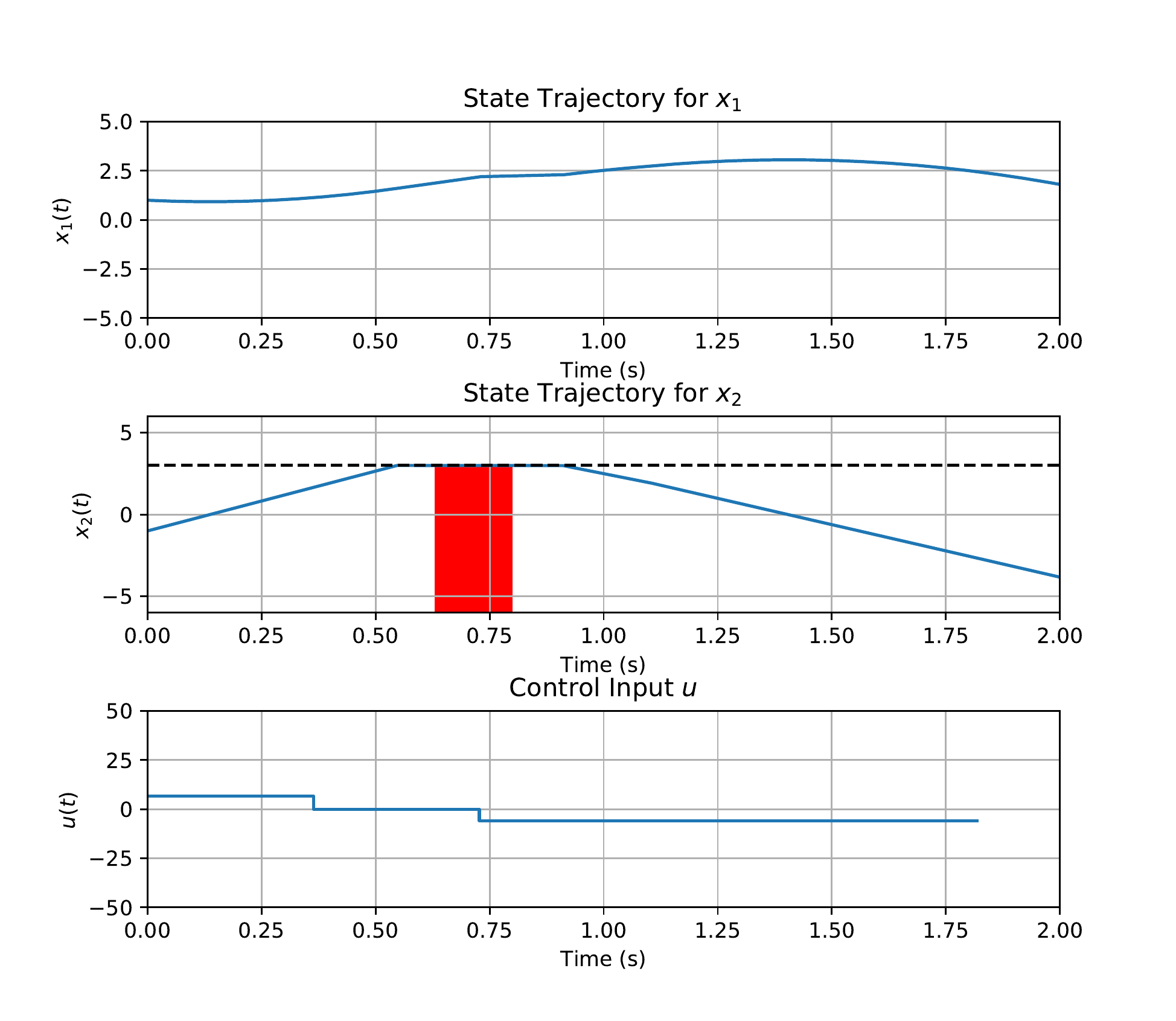}
\caption{Time lines for $t_{sim}$ and $t_{real}$}
\label{fig:example3}
\end{figure}

In Fig \ref{fig:example3}, we demonstrate $\textbf{G}_{[0.63s, 0.80s]}$ is satisfied by define an unsafe region (Red) using \eqref{eq:example3CBF}. Although the CBF constraint only active at $t=0.63$, the additional constraint, i.e.,$u_{sim}[t=0.63] = u_{real}[t=0.6]$, ensures we can directly assign $u_{real}[t=0.6] := u_{sim}[t=0.63]$. The resulting trajectory from the real system still satisfy the original specification. The MIQP is solved in 0.042s.

\section{Conclusion}
We have developed a complete framework of trajectory planning under the real-valued space and time constraints from the STL specification. In addition, the notion of encoding Always \textbf{G} opeartor using the lower bound of CBF is introduced to overcome the drawbacks from discrete-time STL robustness based control. The future work is to extend our proposed method to a Model Predictive Control (MPC) so that we can obtain an feedback controller for real-world applications. In addition, we are also interested in having an adaptive control where a STL is satisfied in continuous-time but with minimum number of controller updates. 

\bibliographystyle{IEEEtran}
\bibliography{IEEEabrv,mybib}

\begin{thebibliography}{10}
\providecommand{\url}[1]{#1}
\csname url@samestyle\endcsname
\providecommand{\newblock}{\relax}
\providecommand{\bibinfo}[2]{#2}
\providecommand{\BIBentrySTDinterwordspacing}{\spaceskip=0pt\relax}
\providecommand{\BIBentryALTinterwordstretchfactor}{4}
\providecommand{\BIBentryALTinterwordspacing}{\spaceskip=\fontdimen2\font plus
\BIBentryALTinterwordstretchfactor\fontdimen3\font minus
  \fontdimen4\font\relax}
\providecommand{\BIBforeignlanguage}[2]{{%
\expandafter\ifx\csname l@#1\endcsname\relax
\typeout{** WARNING: IEEEtran.bst: No hyphenation pattern has been}%
\typeout{** loaded for the language `#1'. Using the pattern for}%
\typeout{** the default language instead.}%
\else
\language=\csname l@#1\endcsname
\fi
#2}}
\providecommand{\BIBdecl}{\relax}
\BIBdecl

\bibitem{leahy2016persistent}
K.~Leahy, D.~Zhou, C.-I. Vasile, K.~Oikonomopoulos, M.~Schwager, and C.~Belta,
  ``Persistent surveillance for unmanned aerial vehicles subject to charging
  and temporal logic constraints,'' \emph{Autonomous Robots}, vol.~40, no.~8,
  pp. 1363--1378, 2016.

\bibitem{sadraddini2016model}
S.~Sadraddini and C.~Belta, ``Model predictive control of urban traffic
  networks with temporal logic constraints,'' in \emph{2016 American Control
  Conference (ACC)}.\hskip 1em plus 0.5em minus 0.4em\relax IEEE, 2016, pp.
  881--881.

\bibitem{serlin2018distributed}
Z.~Serlin, K.~Leahy, R.~Tronl, and C.~Beita, ``Distributed sensing subject to
  temporal logic constraints,'' in \emph{2018 IEEE/RSJ International Conference
  on Intelligent Robots and Systems (IROS)}.\hskip 1em plus 0.5em minus
  0.4em\relax IEEE, 2018, pp. 4862--4868.

\bibitem{baier2008principles}
C.~Baier and J.-P. Katoen, \emph{Principles of model checking}.\hskip 1em plus
  0.5em minus 0.4em\relax MIT press, 2008.

\bibitem{maler2004monitoring}
O.~Maler and D.~Nickovic, ``Monitoring temporal properties of continuous
  signals,'' in \emph{Formal Techniques, Modelling and Analysis of Timed and
  Fault-Tolerant Systems}.\hskip 1em plus 0.5em minus 0.4em\relax Springer,
  2004, pp. 152--166.

\bibitem{donze2010robust}
A.~Donz{\'e} and O.~Maler, ``Robust satisfaction of temporal logic over
  real-valued signals,'' in \emph{International Conference on Formal Modeling
  and Analysis of Timed Systems}.\hskip 1em plus 0.5em minus 0.4em\relax
  Springer, 2010, pp. 92--106.

\bibitem{raman2014model}
V.~Raman, A.~Donz{\'e}, M.~Maasoumy, R.~M. Murray, A.~Sangiovanni-Vincentelli,
  and S.~A. Seshia, ``Model predictive control with signal temporal logic
  specifications,'' in \emph{53rd IEEE Conference on Decision and
  Control}.\hskip 1em plus 0.5em minus 0.4em\relax IEEE, 2014, pp. 81--87.

\bibitem{raman2015reactive}
V.~Raman, A.~Donz{\'e}, D.~Sadigh, R.~M. Murray, and S.~A. Seshia, ``Reactive
  synthesis from signal temporal logic specifications,'' in \emph{Proceedings
  of the 18th international conference on hybrid systems: Computation and
  control}.\hskip 1em plus 0.5em minus 0.4em\relax ACM, 2015, pp. 239--248.

\bibitem{sadraddini2015robust}
S.~Sadraddini and C.~Belta, ``Robust temporal logic model predictive control,''
  in \emph{2015 53rd Annual Allerton Conference on Communication, Control, and
  Computing (Allerton)}.\hskip 1em plus 0.5em minus 0.4em\relax IEEE, 2015, pp.
  772--779.

\bibitem{wieland2007constructive}
P.~Wieland and F.~Allg{\"o}wer, ``Constructive safety using control barrier
  functions,'' \emph{IFAC Proceedings Volumes}, vol.~40, no.~12, pp. 462--467,
  2007.

\bibitem{ames2014control}
A.~D. Ames, J.~W. Grizzle, and P.~Tabuada, ``Control barrier function based
  quadratic programs with application to adaptive cruise control,'' in
  \emph{Decision and Control (CDC), 2014 IEEE 53rd Annual Conference on}.\hskip
  1em plus 0.5em minus 0.4em\relax IEEE, 2014, pp. 6271--6278.

\bibitem{borrmann2015control}
U.~Borrmann, L.~Wang, A.~D. Ames, and M.~Egerstedt, ``Control barrier
  certificates for safe swarm behavior,'' \emph{IFAC-PapersOnLine}, vol.~48,
  no.~27, pp. 68--73, 2015.

\bibitem{wang2016safety}
L.~Wang, A.~Ames, and M.~Egerstedt, ``Safety barrier certificates for
  heterogeneous multi-robot systems,'' in \emph{2016 American Control
  Conference (ACC)}.\hskip 1em plus 0.5em minus 0.4em\relax IEEE, 2016, pp.
  5213--5218.

\bibitem{hsu2015control}
S.-C. Hsu, X.~Xu, and A.~D. Ames, ``Control barrier function based quadratic
  programs with application to bipedal robotic walking,'' in \emph{American
  Control Conference (ACC), 2015}.\hskip 1em plus 0.5em minus 0.4em\relax IEEE,
  2015, pp. 4542--4548.

\bibitem{ghaffari2018safety}
A.~Ghaffari, I.~Abel, D.~Ricketts, S.~Lerner, and M.~Krsti{\'c}, ``Safety
  verification using barrier certificates with application to double integrator
  with input saturation and zero-order hold,'' in \emph{2018 Annual American
  Control Conference (ACC)}.\hskip 1em plus 0.5em minus 0.4em\relax IEEE, 2018,
  pp. 4664--4669.

\bibitem{chen2017enhancing}
Y.~Chen, A.~Hereid, H.~Peng, and J.~Grizzle, ``Enhancing the performance of a
  safe controller via supervised learning for truck lateral control,''
  \emph{arXiv preprint arXiv:1712.05506}, 2017.

\bibitem{wang2018safe}
L.~Wang, E.~A. Theodorou, and M.~Egerstedt, ``Safe learning of quadrotor
  dynamics using barrier certificates,'' in \emph{2018 IEEE International
  Conference on Robotics and Automation (ICRA)}.\hskip 1em plus 0.5em minus
  0.4em\relax IEEE, 2018, pp. 2460--2465.

\bibitem{nguyen2016exponential}
Q.~Nguyen and K.~Sreenath, ``Exponential control barrier functions for
  enforcing high relative-degree safety-critical constraints,'' in
  \emph{American Control Conference (ACC), 2016}.\hskip 1em plus 0.5em minus
  0.4em\relax IEEE, 2016, pp. 322--328.

\bibitem{gurobi}
\BIBentryALTinterwordspacing
L.~Gurobi~Optimization, ``Gurobi optimizer reference manual,'' 2018. [Online].
  Available: \url{http://www.gurobi.com}
\BIBentrySTDinterwordspacing

\end{thebibliography}

\end{document}